\newcommand{\arxiv}[1]{#1}
\newcommand{\journal}[1]{}
\journal{\documentclass[english,preprint,JIP]{ipsj}
\bibliographystyle{ipsjunsrt-e}
}
\newcommand{\defn}[1]{\textbf{\textit{\boldmath #1}}}
 \gdef\xxxmark{%
   \expandafter\ifx\csname @mpargs\endcsname\relax %
     \expandafter\ifx\csname @captype\endcsname\relax %
       \marginpar{xxx}%
     \else
       xxx %
     \fi
   \else
     xxx %
   \fi}
 \gdef\xxx{\@ifnextchar[\xxx@lab\xxx@nolab}
 \long\gdef\xxx@lab[#1]#2{\textbf{[\xxxmark #2 ---{\sc #1}]}}
 \long\gdef\xxx@nolab#1{\textbf{[\xxxmark #1]}}
\def\Underline{\setbox0\hbox\bgroup\let\\\endUnderline}
\def\endUnderline{\vphantom{y}\egroup\smash{\underline{\box0}}\\}
\def\|{\verb|}
\newtheorem{problem}{Problem}
\let\epsilon=\varepsilon
\newtheorem{theorem}{Theorem}
\newtheorem{corollary}{Corollary}
\newtheorem{lemma}{Lemma}
\begin{document}

\title{Folding One Polyhedral Metric Graph into Another}

\journal{
\affiliate{mit}{MIT Computer Science and Artificial Intelligence Laboratory,
      32 Vassar St., Cambridge, MA 02139, USA}
\affiliate{meiji}{Meiji Institute for Advanced Study of Mathematical Sciences, Meiji University, Nakano, Tokyo 164-8525, Japan}
\author{Lily Chung}{mit}[lkdc@mit.edu]
\author{Erik D. Demaine}{mit}[edemaine@mit.edu]
\author{Martin L. Demaine}{mit}[mdemaine@mit.edu]
\author{Markus Hecher}{mit}[hecher@mit.edu]
\author{Rebecca Lin}{mit}[ryelin@mit.edu]
\author{Jayson Lynch}{mit}[jaysonl@mit.edu]
\author{Chie Nara}{meiji}
[cnara@jeans.ocn.ne.jp]}

\arxiv{%
\author[a]{Lily Chung}
\author[a]{Erik D. Demaine}
\author[a]{Martin L. Demaine}
\author[a]{Markus Hecher}
\author[a]{Rebecca Lin}
\author[a]{Jayson Lynch}
\author[b]{Chie Nara}
\affil[a]{MIT Computer Science and Artificial Intelligence Laboratory,
      32 Vassar St., Cambridge, MA 02139, USA, {\{lkdc,edemaine,mdemaine,hecher,ryelin,jaysonl\}@mit.edu}}
\affil[b]{Meiji Institute for Advanced Study of Mathematical Sciences, Meiji University, Nakano, Tokyo 164-8525, Japan,
{cnara@jeans.ocn.ne.jp}
}%
\maketitle}

\begin{abstract}
We analyze the problem of folding one polyhedron, viewed as a metric graph of its edges, into the shape of another, similar to 1D origami. We find such foldings between all pairs of Platonic solids and prove corresponding lower bounds, establishing the optimal scale factor when restricted to integers.
Further,
	we establish that our folding problem is also NP-hard, even if the source graph is a tree. It turns out that the problem is hard to approximate,
	as we obtain NP-hardness even for determining the existence of a scale factor $1.5-\epsilon$.
Finally, we prove that, in general, the optimal scale factor has to be rational. This insight then immediately results in NP membership. 
In turn, verifying whether a given scale factor is indeed the smallest possible, requires two independent calls to an NP oracle, rendering the problem DP-complete.
\end{abstract}

\journal{\maketitle}

\section{Introduction}

\noindent
Viewing a polyhedron as a \defn{metric graph}
(graph with specified edge lengths),
when can we \defn{fold} it into another polyhedron,
in the sense of 1D origami \cite{demaine2007geometric}
where lengths must be preserved and we view multiple overlapping layers as one?
More formally:

\medskip
\begin{problem}[IsoCovering]\label{prob:main}
    Given two metric spaces $A$ and $B$, find an \defn{isometric covering} of $B$ by $A$, that is, a surjective map $m: A \to B$ such that, for every path $p$ in $A$, the arc length of $p$ in $A$ equals the arc length of $m(p)$ in $B$.
\end{problem}
\smallskip

\begin{figure}[h]%
  \centering
\includegraphics[width=\linewidth]{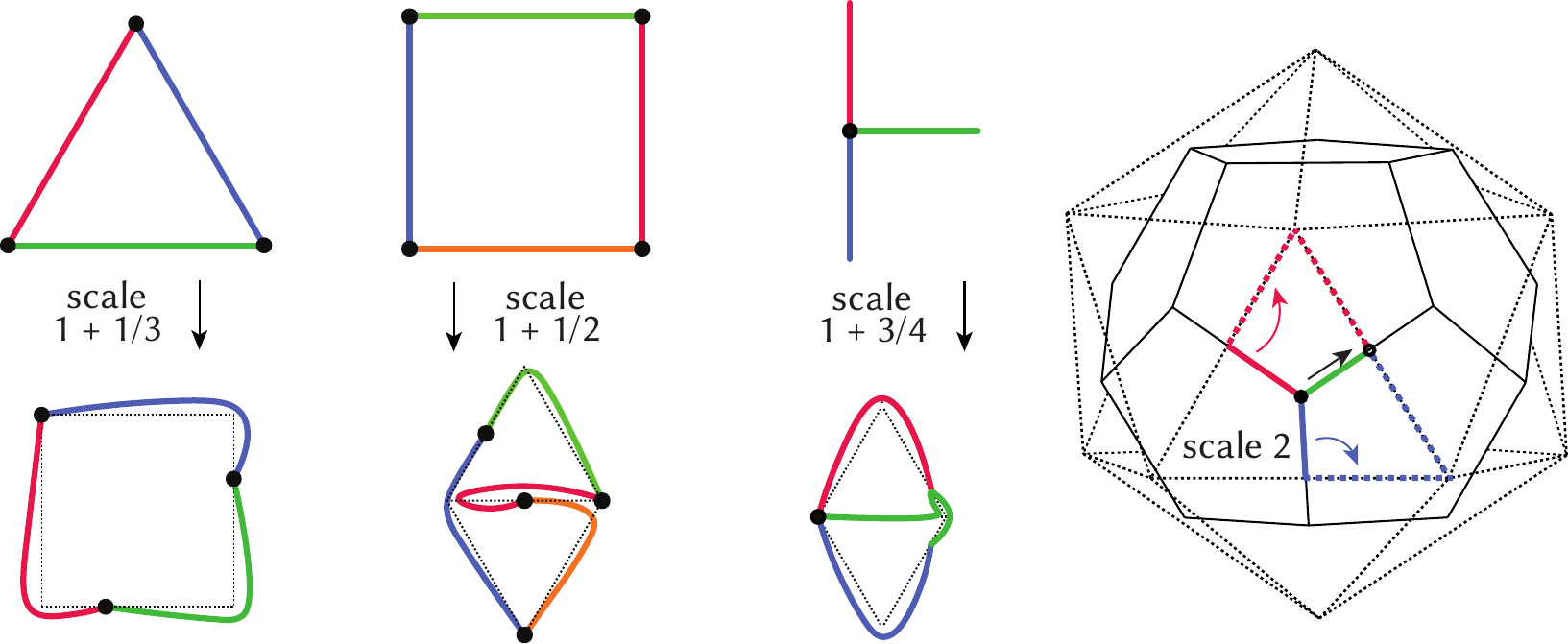}
  \caption{(Left): Simple example foldings, where source graphs are routed onto target graphs. (Right): Folding icosahedron into a dodecahedron via inscription of the target geometry into the source.}
  \label{fig:example}
  \vspace{-10pt}
\end{figure} 

\arxiv{\medskip}
Figure~\ref{fig:example} shows simple examples.
To ensure that this is always possible, we can scale the lengths in $A$ by a constant \defn{scale factor} $\alpha$ (in Figure~\ref{fig:example}, scale factors $\alpha$ are given for unit edge lengths). %
The required scale factor between source and target shape
is also the primary concern in computational origami design
\cite{demaine2007geometric}.

We aim to compute the minimum possible scale factor.
This leads to the decision problem of verifying whether a given scale factor is possible as well as the following optimization problem. %

\medskip
\begin{problem}[Optimization]
    Given graphs $A$ and $B$, minimize $\alpha\in\mathbb{R}$ such that $B$  scaled by $\alpha$ has an isometric covering from~$A$.
\end{problem}
\smallskip

Some well-known problems can be cast into this framework.
When the source graph $A$ is a path or a cycle, this problem is
exactly the \defn{Chinese postman tour} problem:
what is the shortest cycle that visits every edge at least once?
This problem, and thus optimal scale factors in this case,
can be computed in polynomial time~\cite{EdmondsJohnson73}.
Notably, folding cycles like this are commonly studied in the context of
1D origami~\cite[Chapters 12]{DemaineRourke07}.

\subsection{Our Results}

\noindent 
First, we consider foldings of every Platonic solid into every other,
proving many upper bounds (Section~\ref{sec:upper})
and lower bounds (Section~\ref{sec:lower}) on the optimal scale factor~$\alpha$.
Table~\ref{tab:UpperBounds} summarizes these results,
which are tight if we artificially restrict to integral scale factor.

Next, we analyze the complexity of the general problem.
We prove that IsoCovering is NP-hard and that the optimization problem is hard to approximate within some constant factor (Section~\ref{sec:hard}).
We prove that the optimal scale factor is always rational
(Section~\ref{sec:rational}),
which immediately gives rise to NP membership.
We also show DP-completeness of verifying that a given scale factor
is the optimal (Section~\ref{sec:consequences}).
We conclude with open problems in Section~\ref{sec:open}.

\begin{table*}[t]
\caption{Our integer-tight lower and upper bounds, given as intervals, on the minimum scale factor for folding one Platonic solid (row) into another (column), both with unit edge lengths. $*$ The lower bound from icosahedron to cube is due to topological arguments, see also Theorem~\ref{thm:icosacube}.}
\begin{tabularx}{\textwidth}{l@{\hspace{.25em}}|@{\hspace{.45em}}X@{\hspace{.8em}}X@{\hspace{.8em}}X@{\hspace{.8em}}X@{\hspace{.8em}}X}

\toprule 
        $\Rsh$     & Tetrahedron & Cube                    & Octahedron & Dodecahedron & Icosahedron \\ \midrule 
Tetrahedron & [1, 1]  & [2+1/3, 2+5/6] & (2, 2+1/2]  & [6+1/3, 6+5/6]  & (5+2/3, 5+7/8]       \\ %
Cube& (1/2, 5/6] & [1, 1] & (1, 1+1/2]    & [3, 3]   & (2+2/3, 3]   \\ %
Octahedron   & (1/2, 1] & (1+1/12, 1+1/2] & [1, 1]   & (3+1/12, 4]  & (2+1/2, 3]  \\ %
Dodecahedron & (1/5, 3/5] & (2/5, 4/5] & (2/5, 3/4] & [1, 1] & (1, 1+1/3]  \\ %
Icosahedron  & (1/5, 1]  & (1*, 1+1/3]  & (2/5, 1]  & (1+2/15, 2]  & [1, 1] \\ \bottomrule                             
\end{tabularx}
\label{tab:UpperBounds}
\end{table*}

\section{Platonic Upper Bounds (Foldings)}\label{sec:upper}
\label{sec:solutions}

\noindent To achieve the upper bounds reported in Table~\ref{tab:UpperBounds}, we developed initial solutions by observing inscriptions of one polyhedron in another (Figure~\ref{fig:example} (right)), and then further optimized these solutions through manual rerouting and automated search. For the latter, we combined two brute-force paradigms---logic programming and integer linear programming---along with local improvement techniques. 

\begin{figure*}[th]\arxiv{\vspace{-3.35em}}
\includegraphics[width=0.97\textwidth]{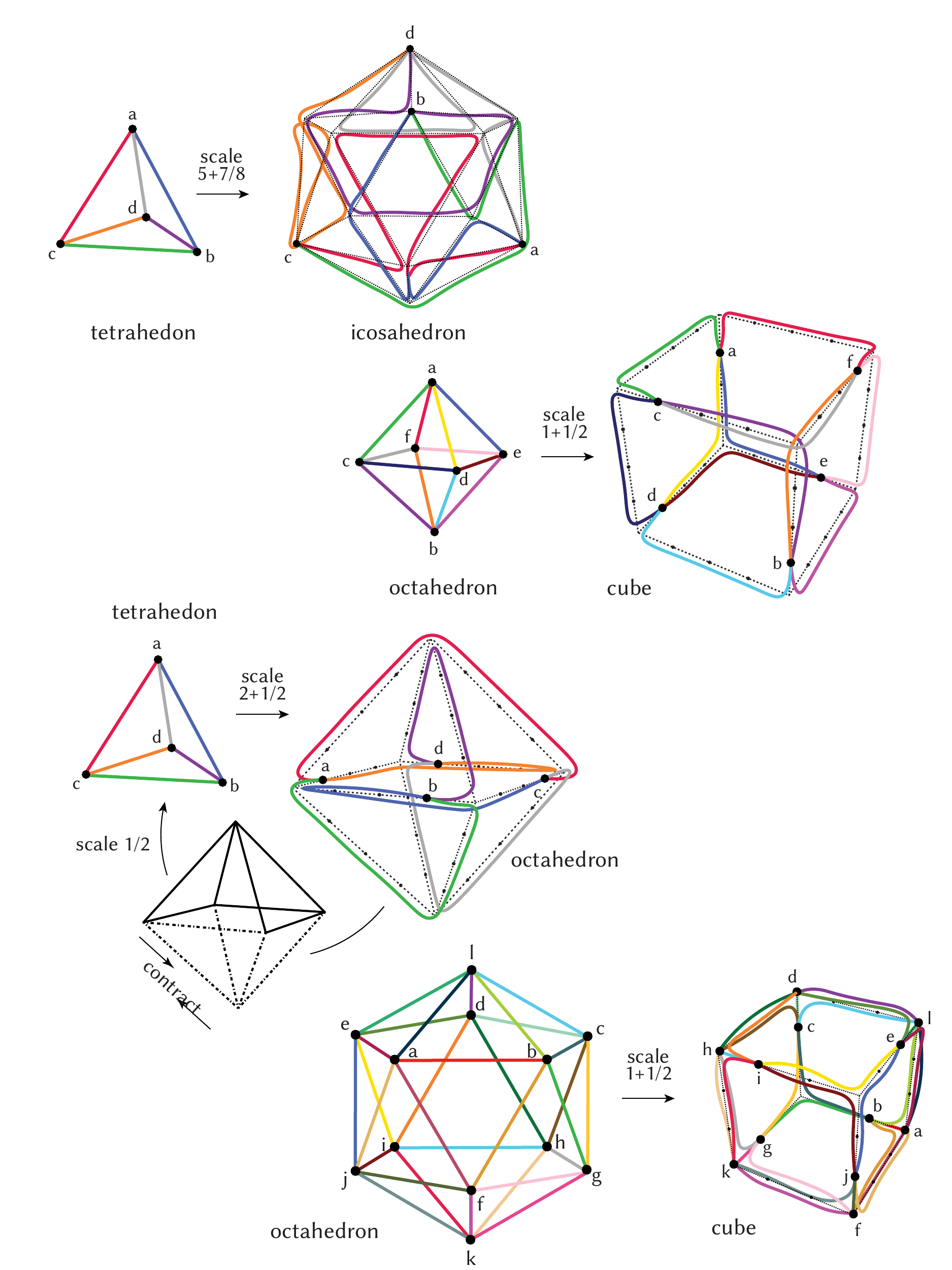}\arxiv{\vspace{-1em}}
\caption{Routing edges via subdivisions of (a) a tetrahedron to an icosahedron with $\alpha=47/8$, (b) an octahedron to a cube with $\alpha=3/2$, (c) a tetrahedron to an octahedron with $\alpha=5/2$ (and back with scale $\alpha=1$), and (d) an octahedron to a cube with $\alpha=3/2$. Note that mappings are schematic and, therefore, not drawn to scale.}\label{fig:solutions}
\end{figure*}

\subsection{Automated Upper Bounds}

\noindent For obtaining upper bounds, we use two different computing techniques. First, we will discuss an approach based on logic programming, which inherits concepts from propositional satisfiability (SAT) with direct language supports for transitive closure (reachability). %
Then, we briefly present an integer linear program.
At the end of this section, we describe our approach of combining both technologies in the form of a local improvement.

\medskip
\subsubsection{Integer Scales}\label{sec:lp}
\noindent 
Modern SAT-based search engines~\cite{FichteEtAl23} provide an efficient tool for searching for solutions to combinatorial problems. The advantage of these solvers is that they are based on non-chronological backtracking, thereby using sophisticated heuristics to efficiently traverse and cut the search space. With the help of sophisticated conflict analysis, the idea is to learn from unsuccessful branches of the search tree, with the goal of avoiding (short-cutting) similar conflicts in the future. %

We design an encoding to utilize a SAT-based logic programming solver, called clingo~\cite{GebserEtAl19}, which is particularly focused on efficiently evaluating transitive closure properties by lazily adding constraints during solving. %
While we do not expect to fully solve optimization problems in general, in practice we can use our encoding to efficiently compute solutions of decent quality.

However, as this method is SAT-based it translates to Boolean variables. Consequently, we cannot expect arbitrarily large precision, but we can aim for integer or limited rational scale factors. These approaches still work well for quickly obtaining decent solutions that then serve as a basis for further optimization.
Let $S$ be the source graph and $D$ be the target graph.
While we do not care about the direction of edges, 
we assume both graphs to be directed (any fixed edge orientation),
simplifying the encoding.

{\scriptsize
\begin{lstlisting}
% Input: 
% vertex(x)  for every source vertex x in S
% edge(x,y)  for every source edge (x,y) in S
% tvertex(x) for every target vertex x in D
% tedge(x,y) for every target edge (x,y) in D

% Guess single scale factor
{scale(S): S >= 1, S <= |$\mid E(D) \mid$|}=1
% Minimize scale factor
|$\#$|minimize{S:scale(S)}

% Source vertices X assigned to single target A
{sass(X,A):tvertex(A)}=1 |$\leftarrow$| vertex(X)

% Reachability via source assignment
reach(e(X,Y),A) |$\leftarrow$| sass(X,A), edge(X,Y)
% Reachability via outgoing edge
reach(e(X,Y),B) |$\leftarrow$| eass(e(X,Y),_,B) 

% Guess outgoing edge among reached vertices
{eass(e(X,Y),A,B)} |$\leftarrow$| reach(e(X,Y),A), edge(A,B)
{eass(e(X,Y),A,B)} |$\leftarrow$| reach(e(X,Y),A), edge(B,A)

% Target path length must not exceed scale
|$\bot \leftarrow$| #count{A,B:eass(e(X,Y),A,B)} > S, scale(S), edge(X,Y)

% Every target edge must be covered
|$\bot \leftarrow$| tedge(A,B), not eass(_,A,B), not eass(_,B,A)

% Every source edge must be constructed
con(X,Y) |$\leftarrow$| edge(X,Y), reach(e(X,Y),B), sass(Y,B)
|$\bot \leftarrow$| edge(X,Y), not con(X,Y)

% Degree rule to ensure income equals outcome 
|$\bot \leftarrow$| O=#count{B: eass(e(X,Y),A,B)}, I=#count{B: eass(e(X,Y),B,A)}, O|$\neq$|I, reach(e(X,Y),A), not sass(Y,A), not sass(X,A)
% Source gets additional outcome
|$\bot \leftarrow$| O=#count{B: eass(e(X,Y),A,B)}, I=#count{B: eass(e(X,Y),B,A)}, O|$\neq$|I+1, reach(e(X,Y),A), sass(X,A), not sass(Y,A)
% Target gets additional income
|$\bot \leftarrow$| O=#count{B: eass(e(X,Y),A,B)}, I=#count{B: eass(e(X,Y),B,A)}, O+1|$\neq$|I, reach(e(X,Y),A), sass(Y,A), not sass(X,A)
\end{lstlisting}
}

Note that we extend this encoding to capture limited rational scale factors. However, it works similarly and was skipped for the sake of presentation. There, we subdivide target edges (which then still allows us to map source vertices only to target vertices) and consider these subdivisions when computing the scale factor.

\medskip
\subsubsection{Rational Scales via ILP}\label{sec:milp}

\noindent In the following we discuss an \emph{integer linear program (ILP)}.
To this end let $S$ be the source graph and $D$ be the target graph. Further, we assume $D'$ to be a modification of target $D$, where each edge $e=\{x,y\}$ in $E(D)$ is subdivided using up to $c\leq |V(S)|$ intermediate \emph{auxiliary vertices}. These auxiliary vertices allow us to position source vertices on edge positions and are positioned between 0 ($x$) and 1 ($y$). For every target edge $e\in E(D)$, we assume a predecessor relation $\prec$ among the vertices in $V(D')$ on $e$ obtained after subdividing. Without loss of generality, we assume that edges in $E(D)$ are directed, based on this ordering, i.e.\ edges go from $\prec$-lower to $\prec$-higher vertices.
Let $\alpha\leq |E(D)|$ be a chosen scale factor upper bound constant.

\smallskip
In the encoding, we use the following variables:

{\footnotesize
\begin{align*}
	&\text{directed flow} &&f_{s,x,y} \geq 0 && \text{for } s \in E(S), \{x,y\}\in E(D')\\
    &\text{directed edge decision} && e_{s,x,y}: \text{binary} \\ %
	&\text{src mapping} && m_{u,n}: \text{binary} && \text{for } u \in V(S), n\in V(D')\\
	&\text{aux position} && a_{n}: [0,1] && \text{for } n \in V(D')\\
	&\text{src reachability} && r_{s,n}: \text{binary} && \text{for } s\in E(S), n\in V(D')\\
	&\text{directed edge length} && l_{s,x,y}: [0,1] && \text{for } s\in E(S), \{x,y\}\in E(D')\\
	&\text{objective} && o \geq 0 && %
\end{align*}}

\smallskip
For every $s\in E(S)$, we then generate the following ILP: %

{\footnotesize
\begin{align*}
	&\text{edge/flow rel.} && e_{s,x,y} \leq f_{s,x,y} && \text{for }%
	\{x,y\}\in E(D')\\
	&\text{} && f_{s,x,y} \leq \alpha \cdot e_{s,x,y}\\ %
	&\text{edge/reach rel.} && e_{s,n,x} \leq r_{s,n} && \text{for }%
	\{n,x\}{\,\in\,}E(D'), n{\,\in\,}V(D') \\
	&\text{} && r_{s,n} \leq \sum_{x\in ngbs(n)}(e_{s,x,n} + e_{s,n,x})\hspace{-5em} && \text{for }%
	n\in V(D')\\
	&\text{edge/length rel.} && l_{s,x,y} \leq e_{s,x,y} && \text{for }%
	\{x,y\}\in E(D') \\
	&\text{} && l_{s,x,y} \leq a_y - a_x %
	&& \text{for }%
	x,y{\,\in\,}V(D'), x{\,\prec\,}y\\ %
	&\text{} && l_{s,y,x} \leq a_y - a_x,\quad a_y \geq a_x\\
	&\text{} && l_{s,x,y} - a_y + a_x \geq e_{s, x,y} - 1\hspace{-5em} %
	\\
	&\text{} && l_{s,y,x} - a_y + a_x \geq e_{s, y,x} - 1\hspace{-5em}\\ %
	&\text{} && a_x=0, a_y=1 && \text{for } (x,y)\in E(D)\\
	&\geq\text{demand }1 && \Sigma_{x\in ngbs(n)} f_{s,x,n} -  \Sigma_{x\in ngbs(n)} f_{s,n,x} +\hspace{-5em} %
	&& \text{for }%
	s{=}(u,v), n{\,\in\,}V(D')\\[-.5em]
	&\text{} && \qquad\qquad\qquad\quad\alpha\cdot m_{u,n} \geq r_{s,n}\\ %
	&\text{degree constr.} && \Sigma_{x\in ngbs(n)} e_{s,n,x} -  \Sigma_{x\in ngbs(n)} e_{s,x,n} -\hspace{-5em} %
	\\[-.5em]
	&\text{} && m_{u,n} + m_{v,n} = 0 \\ %
	&1\text{ target per src} && \Sigma_{n\in V(D')} m_{u,n} =1 && \text{for } u\in V(S)\\
	&\text{edges covered} && \Sigma_{s'\in E(S)} e_{s',y,x}+e_{s',x,y} \geq 1 && \text{for } \{x,y\}\in E(D')\\
	&\text{lengths }\leq o && \Sigma_{\{x,y\}\in E(D')}l_{s,x,y} \leq o\\ %
	&&&\min(o)
\end{align*}}

\medskip
\subsubsection{Local Improvement of Limited Rational Scales}\label{sec:local}

\noindent For computing and improving upper bounds among platonic solids (as presented in Table~\ref{tab:UpperBounds} works as follows.
First, using the approach discussed in Section~\ref{sec:lp}, we obtain initial mappings with integer or limited rational scale factors.
These mappings are then relaxed, where we allow a shift of source vertices mapped onto target vertices or edges.
The goal is then to use the approach of Section~\ref{sec:milp} to optimize those shifts such that the resulting solution is better or at
least only slightly worse.
We then iteratively try to keep parts of a solution that is then optimized and completed with the presented encoding.
Whenever we obtain a new solution we keep parts of it for the next iteration, assuming the quality improved or only slightly worsened.

\section{Lower Bounds}
\label{sec:lower}

\noindent For any pair of polyhedra, a na\"ive lower bound on the optimal scale factor is immediate: every target edge needs to be covered by a source edge, thus
$
\text{OPT} \geq \tfrac{\text{perimeter of target graph}}{\text{perimeter of source graph}}
$. 
We significantly improve this lower bound with the following observations: 
\begin{itemize}
    \item Each source vertex is mapped to either a vertex or a point along an edge of the target;
    \item Each source edge is routed to a path in the target;  
    \item The scale factor is the maximum length of the routed target paths.
\end{itemize}

\noindent Let $n_s$ be the number of vertices in the source graph, and let $o_t$ be the number of vertices of odd degree in the target. 
The following result forms the basis of our lower bounds: 

\medskip
 \begin{lemma}\label{lem:double}
     Every target edge not containing a source vertex is either covered (1) by one or more paths going straight along the complete edge, or (2) by two doubling-back paths that meet at a point (doubling the entire target edge).
 \end{lemma}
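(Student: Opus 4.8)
The plan is to analyze how a single target edge $e$ gets covered, using a parity argument on the number of times each interior point is covered. Let $e$ have target-vertex endpoints $u,w$ and open interior $e^\circ$, and recall the hypothesis that no source vertex maps into $e^\circ$. Since the covering preserves arc length, every source edge, restricted to the portion mapping into $e^\circ$, is a unit-speed trajectory in the edge coordinate of $e$; and since no source vertex sits in $e^\circ$, such a trajectory can exit $e^\circ$ only through a target vertex. Hence every maximal sub-arc of a source edge lying over $e^\circ$ has both endpoints mapping into $\{u,w\}$. I would classify each sub-arc as \emph{crossing} (one endpoint at $u$, the other at $w$) or \emph{returning} (both endpoints at $u$, or both at $w$), and let $N(x)$ count the sub-arcs passing over $x\in e^\circ$; surjectivity of the covering gives $N(x)\ge 1$ everywhere.

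The crux is the parity observation that $N(x)\bmod 2$ is constant along $e^\circ$ and equals the number of crossing sub-arcs modulo $2$. This is an intermediate-value count: away from the finitely many turning values, a crossing sub-arc starts on one side of a level $x$ and ends on the other, so it meets $x$ an odd number of times, whereas a returning sub-arc starts and ends on the same side and meets $x$ an even number of times. Summing over all sub-arcs, the returning ones contribute an even amount to $N(x)$ for every such $x$, so the parity of $N$ is controlled entirely by the crossing sub-arcs and is therefore the same for all interior points.

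The dichotomy then falls out. If at least one crossing sub-arc exists, $e$ is traversed from end to end by one or more paths, which is case~(1); any backtracking inside a crossing path only increases its length, so for the lower-bound analysis such traversals may be taken straight. If there is no crossing sub-arc, then $N(x)$ is even and positive, hence $N(x)\ge 2$ for every interior point, so the whole edge is doubled. Realizing this doubling with returning sub-arcs alone and without wasted overlap forces exactly one doubling-back path entering at $u$ and one entering at $w$, turning around at a common point $x_0$ and jointly covering $e$ with multiplicity two --- this is case~(2).

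The step I expect to be the main obstacle is not the parity count but the reduction of case~(2) to precisely two paths meeting at a single point: I must argue that nested or overlapping returning arcs can be replaced, without increasing total length, by this canonical pair, and I must dispose of the degenerate configuration in which one returning arc pivots exactly at $w$ (so that the meeting point collapses onto an endpoint). Careful bookkeeping at $u$ and $w$ --- where source vertices are permitted to sit and where doubling-back arcs may turn around --- is where the argument needs the most care.
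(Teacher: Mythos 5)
Your proof is correct, but it takes a genuinely different and substantially more complete route than the paper's. The paper's entire proof is one sentence: if some path already traverses the edge straight through, any doubling-back path is unhelpful and can be shortcut to never enter the edge; the complementary case (no straight traversal forces full doubling) is left implicit. Your parity argument supplies exactly that missing half: decomposing the coverage of the open interior into crossing and returning sub-arcs (legitimate because no source vertex lies in the interior, so every sub-arc must enter and exit through an endpoint), observing that $N(x)\bmod 2$ is constant and equals the number of crossing sub-arcs, and concluding that the absence of a crossing arc forces $N(x)\ge 2$ everywhere. What the paper's shortcut argument buys is brevity and the explicit normalization used downstream (doubling-back is ``never helpful'' when a crossing exists); what your argument buys is an actual proof of the dichotomy rather than an appeal to it. Two minor caveats: your count ``away from the finitely many turning values'' silently assumes finitely many creases per source edge, which is standard but worth stating; and the reduction of case~(2) to \emph{exactly} two returning arcs meeting at a single point, which you rightly flag as delicate, is not actually needed for how the lemma is used --- Lemma~2 and the improved lower bound only require that the entire edge be doubly covered, which your parity count already gives. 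Note also that, like the paper, you are implicitly reading the lemma as a statement about solutions normalized by shortcutting (taken literally for arbitrary coverings it would be false, e.g.\ a crossing path plus a gratuitous doubling-back), so your ``may be taken straight'' step in case~(1) is playing the same role as the paper's shortcut sentence.
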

 \begin{proof}
If there is at least one path going straight along the complete edge, then it is never helpful to have a doubling-back path: such a doubling-back can be shortcut to just never enter the edge.
 \end{proof}

We obtain the following based on doubly covering edges. 

\medskip
\begin{lemma}\label{lem:oddeven}
    In any solution, at least $\frac{o_t-n_s}{2}$ target edges must be fully doubly covered. %
    If at least one source vertex is placed in the middle of a target edge, this bound is strict.
\end{lemma}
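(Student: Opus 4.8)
The plan is to run a parity (handshake) argument at the vertices of the target, in the spirit of the Chinese postman lower bound, with source-vertex placements serving as the only objects able to absorb an odd discrepancy. For a fixed routing, let $c_e\ge 1$ be the number of times target edge $e$ is covered, so that $e$ is (at least) doubly covered exactly when $c_e\ge 2$. First I would localize at a single target vertex $t$ and count, with multiplicity, the target half-edges used at $t$, a quantity equal to $\sum_{e\ni t}c_e$. Each covering path (image of a source edge) that meets $t$ either transits it --- passing straight through or folding back, the two behaviors allowed by Lemma~\ref{lem:double} --- thereby using two half-edges at $t$, or else terminates at $t$, which happens precisely when a source vertex is placed there. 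A transit always contributes an even amount, while a source vertex $u$ with $m(u)=t$ contributes $\deg_A(u)$ (one per incident source edge). This yields the key congruence
\[
\sum_{e\ni t} c_e \;\equiv\; \sum_{u\,:\,m(u)=t}\deg_A(u) \pmod 2 .
\]

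Next I would force the doublings. Call $t$ fixed if some source vertex is placed at it and unfixed otherwise. At an unfixed odd-degree target vertex the right-hand side vanishes, so $\sum_{e\ni t}c_e$ is even; since $\deg(t)$ is odd and every incident edge already has $c_e\ge 1$, at least one incident edge must satisfy $c_e\ge 2$, i.e.\ $t$ touches a doubly-covered edge. As each source vertex occupies a single location, at most $n_s$ target vertices are fixed, so at least $o_t-n_s$ odd-degree target vertices are unfixed and each is incident to a doubly-covered edge. I would finish by double-counting incidences between unfixed odd vertices and doubly-covered edges: every unfixed odd vertex supplies at least one incidence, and every doubly-covered edge, having only two endpoints, receives at most two; hence twice the number of doubly-covered edges is at least $o_t-n_s$, giving the bound.

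For strictness, I would note that a source vertex placed in the interior of a target edge lies at no target vertex and therefore cannot fix any odd-degree target vertex --- it is wasted as a parity absorber. So if at least one source vertex is placed mid-edge, the budget of fixers drops to $n_s-1$, the number of unfixed odd vertices rises to at least $o_t-n_s+1$, and the identical double-counting produces strictly more than $\tfrac{o_t-n_s}{2}$ doubly-covered edges.

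The hard part will be making the half-edge count airtight under folding: I must verify that a path doubling back at $t$ (the meeting of two doubling-back paths in Lemma~\ref{lem:double}) still uses an even number of half-edges and that transits and source-vertex terminations exhaust all local behaviors, so the congruence holds verbatim no matter how edges are covered. A secondary subtlety is that a source vertex of even degree cannot truly repair parity even at an odd target vertex; since the argument only invokes the weaker fact that each source vertex fixes at most one vertex, the inequality stays valid (merely loose in such cases), but I would flag this to avoid overclaiming.
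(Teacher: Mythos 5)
Your parity congruence at a target vertex is sound, and for the sub-case it covers it is essentially the paper's argument recast as a handshake count. But there is a genuine gap in the step ``at least one incident edge must satisfy $c_e\ge 2$, i.e.\ $t$ touches a doubly-covered edge.'' Under folding, a target edge does not have a single well-defined multiplicity $c_e$: a path may double back partway along the edge, so the coverage count varies along its length. What your half-edge count at $t$ actually delivers is that some incident edge is covered at least twice \emph{locally at the $t$-end}. Upgrading this to ``the edge is \emph{fully} doubly covered'' (which is what the lemma counts) requires Lemma~\ref{lem:double}, and that lemma only applies to edges with no source vertex in their interior. If a source vertex $s$ sits in the interior of the incident edge $e=\{v,w\}$, the local doubling at $v$ can terminate at $s$, and $e$ need not be fully doubled; your double count then receives no incidence from $v$ and the bound is not established. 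The paper handles exactly this case by charging such escapes to the interior source vertex (each interior source vertex can excuse at most one odd endpoint from needing a wholly doubled incident edge, two source vertices on one edge excuse both endpoints), so that the single budget $n_s$ pays for \emph{both} source vertices placed at target vertices \emph{and} source vertices placed inside target edges. Your accounting spends $n_s$ only on the former.

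This omission also inverts your strictness argument. You treat a mid-edge source vertex as ``wasted as a parity absorber,'' concluding there are strictly more fully doubled edges. In fact a mid-edge source vertex is not wasted: it is precisely the placement that lets an adjacent odd vertex avoid a fully doubled incident edge, at the cost of a doubled \emph{partial} edge. The paper's strictness claim is about total doubled length (the partial doubling contributes length beyond what the $\frac{o_t-n_s}{2}$ fully doubled edges account for, strengthening the lower bound on $\alpha$), not about the count of fully doubled edges increasing. To repair your proof you would need to (i) restate the vertex-local conclusion as ``locally doubled at the $t$-end,'' (ii) split into cases according to whether the locally doubled incident edge contains an interior source vertex, invoking Lemma~\ref{lem:double} in one case and charging the source vertex in the other, and (iii) rework strictness as a statement about extra doubled length.
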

\begin{proof}
	Locally at an odd-degree vertex $v$ of the target graph, if $v$ does not have a vertex of the source graph at it, then $v$ is visited by a sequence of paths that enter and leave.  Because $v$ has odd degree, there must be a local doubling along an edge $e = \{v,w\}$ incident to $v$.  If this doubling continues fully along the edge $e$ then we get one doubled edge for two (potentially) odd-degree vertices, matching the bound of $o_t/2$.

	Now suppose the doubling does not continue fully along the edge $e$. If there are no source vertices along edge $e$ then by Lemma~\ref{lem:double}, the edge has two meeting doubling paths, which doubles the entire edge, and again we are done. So suppose there is a source vertex $s$ along edge $e$. If there are at least two outgoing paths in each direction from $s$, then again, the edge is fully doubled, so $s$ provided no benefit. Otherwise, there is at most one outgoing path in some direction, say $d \in \{v,w\}$. Then $d$ locally acts the same as if $s$ were not there, so it must have an incident doubling not on the edge $e$ (i.e., not toward $s$ so the argument above applies. The other vertex $\{v,w\}-\{d\}$ no longer necessarily has an incident wholly doubled edge, but this matches the ``$-n_s$'' in the bound.
Note: if there are two source vertices along the same edge $e$ then both endpoints potentially have no incident wholly doubled edge. But this again matches the ``$-n_s$'' in the bound (there are two source vertices to charge to).

	\medskip
If at least one source vertex is in the middle of a target edge, then the above bound becomes strict.
In this case, we get a doubled partial edge beyond the bound (on the side $\{v,w\}-\{d\}$ so we have strictly more doubling and get a strictly large bound.
\end{proof}

This yields the following adapted lower bound.

\medskip
\begin{theorem}[Improved Lower Bound]~\begin{center}\rm
$\text{OPT} \geq \tfrac{\text{perimeter of target graph}~+~\text{lengths of doubly covered target edges}}{\text{perimeter of source graph}}.$
\end{center}
\end{theorem}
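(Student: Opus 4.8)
The plan is to derive the improved lower bound directly from the three structural observations preceding Lemma~\ref{lem:double}, combined with the doubling count established in Lemma~\ref{lem:oddeven}. The starting point is the naïve counting argument: every target edge must be covered by at least one source edge routed along it, and the scale factor $\alpha$ equals the maximum length of any routed source path. Since each source edge of (scaled) length $\alpha$ contributes at most $\alpha$ total length, summing over all source edges gives that the total routed length is at most $\alpha \cdot |E(S)| = \alpha \cdot (\text{perimeter of source graph})$ when edges have unit length, or more precisely $\alpha$ times the source perimeter in general. The first step, then, is to set up this global length accounting carefully: total length routed $\leq \alpha \cdot (\text{perimeter of source})$.

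Next I would lower-bound the total routed length from below. The coverage requirement forces the routed paths to cover the entire target graph, contributing at least the perimeter of the target. The key refinement is that Lemma~\ref{lem:oddeven} guarantees that a certain collection of target edges (at least $\frac{o_t - n_s}{2}$ of them, by total length) must be \emph{doubly} covered: these edges are traversed by two layers of source paths rather than one, so they contribute twice their length to the total routed length rather than once. Hence the total routed length is at least (perimeter of target) $+$ (total length of doubly covered edges). Combining the upper and lower bounds on total routed length yields
\[
(\text{perimeter of target}) + (\text{lengths of doubly covered edges}) \;\leq\; \alpha \cdot (\text{perimeter of source}),
\]
and rearranging gives exactly the claimed inequality $\text{OPT} \geq \frac{\text{perimeter of target} + \text{lengths of doubly covered edges}}{\text{perimeter of source}}$.

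I expect the main obstacle to be making the length-accounting rigorous at vertices and at subdivision points, rather than any conceptual difficulty. Specifically, one must be careful that the ``doubly covered'' edges counted by Lemma~\ref{lem:oddeven} are genuinely charged twice and are not already included in the base perimeter term in a way that double-counts incorrectly; the cleanest framing is to measure total routed length as $\sum_{e \in E(D)} (\text{number of layers over } e) \cdot \mathrm{len}(e)$, note that every edge has at least one layer and the doubly covered ones have at least two, and then observe that this sum is bounded above by $\alpha$ times the source perimeter because the routed image of each source edge has length exactly $\alpha$ (in the worst case) and these images partition the layered coverage. A secondary subtlety is the interaction with source vertices placed mid-edge, but since Lemma~\ref{lem:oddeven} already absorbs the ``$-n_s$'' correction into its count of doubly covered edges, this theorem can simply invoke that lemma as a black box and inherit the correction automatically.
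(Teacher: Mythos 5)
Your proposal is correct and matches the argument the paper intends: the theorem is stated without an explicit proof precisely because it follows from the same length-accounting you describe — the isometric images of the source edges have total length $\alpha$ times the source perimeter, and this total must cover every target edge at least once and the edges forced by Lemma~\ref{lem:oddeven} at least twice. Your framing via $\sum_{e}(\text{number of layers over }e)\cdot\mathrm{len}(e)$ is exactly the intended bookkeeping, so there is nothing to add.
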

\medskip

\begin{table*}[t]
\caption{Detailed computation of lower bounds as shown in Table~\ref{tab:UpperBounds}, providing source vertices $n_s$, target odd-degree vertices $o_t$, as well as the perimeter of source and target graph. The lower bound marked by an asterisk is subsumed by Theorem~\ref{thm:icosacube}.}
\arxiv{\footnotesize}
\begin{tabularx}{\textwidth}{l@{\hspace{.25em}}l@{\hspace{.25em}}|@{\hspace{.45em}}X@{\hspace{.45em}}X@{\hspace{.8em}}X@{\hspace{.8em}}X@{\hspace{1.3em}}X}

\toprule 
        $\Rsh$     && Tetrahedron\newline ($o_t=4$, perim$_t=6$) & Cube\newline ($o_t=8$, perim$_t=12$)                    & Octahedron\newline ($o_t=0$, perim$_t=12$) & Dodecahedron\newline ($o_t{\,=\,}20$, perim$_t{\,=\,}30$) & Icosahedron\newline ($o_t{\,=\,}12$, perim$_t{\,=\,}30$) \\ \midrule 
Tetrahedron\newline &($n_s=4$, perim$_s=6$) & 1  & $2+1/3=\frac{12+(8-4)/2}{6}$ & $>2=\frac{12}{6}$  & $6+1/3=\frac{30+(20-4)/2}{6}$  & $>5+2/3{=}\frac{30+(12-4)/2}{6}$       \\[.25em] %
Cube &($n_s=8$, perim$_s=12$) & $>1/2=\frac{6+(4-4)/2}{12}$ & 1 & $>1=\frac{12}{12}$    & $3=\frac{30+(20-8)/2}{12}$   & $>2+2/3{=}\frac{30+(12-8)/2}{12}$   \\[.25em] %
Octahedron &($n_s=6$, perim$_s=12$)  & $>1/2=\frac{6}{12}$ & ${>\,}1+1/12{=}\frac{12+(8-6)/2}{12}$ & 1   & ${>\,}3{\,+\,}1/12{=}\frac{30+(20-6)/2}{12}$  & $>2+1/2{=}\frac{30+(12-6)/2}{12}$  \\[.25em] %
Dodecahedron &($n_s=20$, perim$_s=30$) & $>1/5=\frac{6}{30}$ & $>2/5=\frac{12}{30}$ & $>2/5=\frac{12}{30}$ & 1 & $>1=\frac{30}{30}$  \\[.25em] %
Icosahedron &($n_s=12$, perim$_s=30$) & $>1/5=\frac{6}{30}$  & $>2/5=\frac{12}{30}\ast$  & $>2/5=\frac{12}{30}$  & ${>\,}1{\,+\,}2/15{=}\frac{30+(20-12)/2}{30}$  & 1 \\ \bottomrule                             
\end{tabularx}
\label{tab:lowerbounds}
\end{table*}

Table~\ref{tab:lowerbounds} details computation steps to obtain the lower bounds given in Table~\ref{tab:UpperBounds}. Bounds are strict according to Lemma~\ref{lem:oddeven} and by the observation that mapping between vertices of different degrees (or from larger to smaller degree vertices) causes doubly covered edge parts. %
If mapping from icosahedron to cube, we obtain the following stronger lower bound.

\begin{figure}
    \centering
\includegraphics[width=\arxiv{.75}\columnwidth]{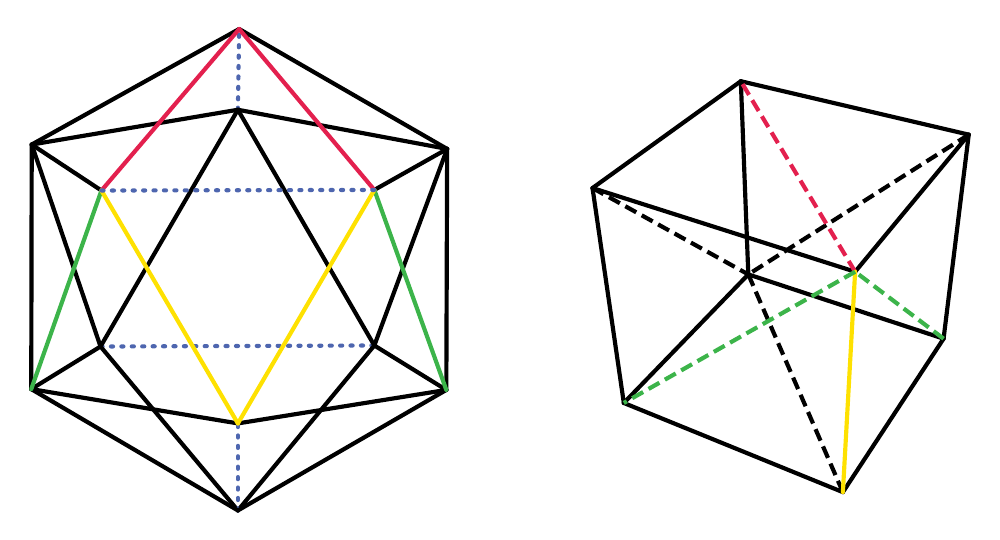}
    \caption{(Left): Icosahedron, where the four edges (dotted) highlighted in blue, are contracted. (Right): Resulting cube with additional diagonal edges (dashed). Two vertices are of degree $6$ and the remaining $6$ vertices have degree $4$.}
    \label{fig:icosa_cube}
\end{figure}

\smallskip
\begin{theorem}\label{thm:icosacube}
There cannot exist an isometric covering from icosahedron to cube.
\end{theorem}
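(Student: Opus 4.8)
The plan is to argue by contradiction: suppose a scale-$1$ isometric covering $m$ from the icosahedron $I$ onto the cube $Q$ exists, both with unit edges. The conceptual heart is the tension between the bipartiteness of $Q$ and the odd triangular faces of $I$. Since $Q$ is bipartite, fix its $2$-coloring $\chi\colon V(Q)\to\{0,1\}$; every walk between cube vertices then changes color by its length modulo $2$, so closed vertex-to-vertex walks have even length. A triangular face of $I$ is a closed loop of length exactly $3$, and because $m$ preserves arc length its image is a closed loop of length $3$ in $Q$. If all three face vertices mapped to cube vertices and no incident source edge doubled back, this image would be an odd closed vertex-walk, which is impossible. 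The catch is precisely doubling back: by Lemma~\ref{lem:double} a unit source edge may fold, mapping both endpoints to the same cube vertex (covering only half of a cube edge), and such folds can absorb the parity defect. The whole difficulty is to show that folds cannot be arranged globally while still covering all of $Q$.

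First I would classify, for a hypothetical $m$, the images of the twelve source vertices as either cube vertices or interior points of cube edges, and record which source edges are genuinely traversed versus folded. Restricting attention to the genuinely-traversed edges between vertex-images gives a subgraph of $I$ which, by the face-parity computation above, meets every triangular face all of whose vertices are cube-vertex-images in an even number of edges; on the part of $I$ mapped to cube vertices this forces $\chi\circ m$ to agree with an honest $2$-coloring. I would then combine this with surjectivity: every one of the twelve cube edges must be fully covered, and by Lemma~\ref{lem:double} an edge not fully traversed must instead be doubled by two folds meeting at a point. Feeding the doubling bookkeeping of Lemma~\ref{lem:oddeven} into this picture pins down how many source vertices are forced onto edge interiors and how many edges must fold.

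The quantitative outcome I expect is exactly the configuration of Figure~\ref{fig:icosa_cube}: collapsing the forced degeneracies contracts four edges of $I$ and yields a graph on eight vertices that is the cube augmented with diagonals, in which two vertices have degree $6$ and the remaining six have degree $4$. This contracted graph is neither $3$-regular nor bipartite---the added diagonals create odd cycles. Hence it cannot arise as the skeleton of an arc-length-preserving surjection onto the $3$-regular bipartite cube: the odd-cycle-versus-bipartite obstruction that folds evaded at the level of single faces reappears at the level of vertex-images, where there is no remaining budget of source vertices to push onto edge interiors to repair it. This contradiction establishes Theorem~\ref{thm:icosacube}. As a sanity check on why scale $1$ is special, note that an honest (fold-free) covering would force the sheet number to equal simultaneously $|V(I)|/|V(Q)| = 12/8$ and $\mathrm{perim}(I)/\mathrm{perim}(Q) = 30/12$, both non-integers, so any covering must fold---and the argument above shows it cannot fold enough.

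The step I expect to be the main obstacle is controlling folds rigorously. Because the metric definition allows a source edge to double back, neither a clean graph-homomorphism argument nor the Euler-characteristic/covering-number argument applies directly; the parity defect of an isolated odd cycle can always be fixed locally. The crux is therefore a finite but delicate case analysis---enumerating how the five half-edges at each interior-mapped vertex split between the two available directions, and how folded edges pair up to doubly cover cube edges---to prove that no globally consistent surjective assignment survives once enough vertices are pinned to edge interiors. The small size and high symmetry of $I$ and $Q$ are what make this case analysis finite and, ultimately, tractable.
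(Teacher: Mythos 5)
Your proposal correctly isolates the right tension (the cube is bipartite, the icosahedron's faces are odd cycles) and, importantly, you also correctly observe that this is not by itself an obstruction, because a source edge may double back: a unit triangle folds isometrically onto a single unit cube edge (the closed walk of length $3$ based at the edge's midpoint that folds at both endpoints), so non-bipartiteness of the source proves nothing on its own. The problem is that your argument never closes this loophole. The step you yourself flag as the crux --- the ``finite but delicate case analysis'' showing that folds cannot be arranged globally --- is not carried out, and the two claims that would substitute for it are asserted rather than derived: (i) that the forced degeneracies are exactly the contraction of the four edges of Figure~\ref{fig:icosa_cube}, and (ii) that the resulting cube-with-diagonals cannot be the ``skeleton'' of a covering of the cube because it is neither $3$-regular nor bipartite and there is ``no remaining budget'' of source vertices to push onto edge interiors. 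Claim (ii) is the same odd-cycle-versus-bipartite obstruction you already conceded is locally repairable by folds, so without a quantitative accounting of why the budget is exhausted it does not yield a contradiction. Note also that the bookkeeping of Lemma~\ref{lem:oddeven} cannot ``pin down'' anything here: with the cube as target and the icosahedron as source, $o_t=8$ and $n_s=12$, so the bound $(o_t-n_s)/2=-2$ is vacuous --- which is exactly why this table entry needs a separate argument in the first place.

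For comparison, the paper's own proof takes a different (and itself quite terse) route: it contracts four icosahedron edges to obtain the cube-with-diagonals of Figure~\ref{fig:icosa_cube}, argues that contraction only identifies source vertices and hence preserves lower bounds, and then argues that the diagonal edges cannot be routed in the cube at scale $1$ --- a connectivity argument on the contracted source rather than a parity argument on the target. Your sketch gestures at the same figure but uses it in the opposite logical direction (as the \emph{conclusion} of a case analysis rather than as the \emph{starting} reduction), and in neither direction does your write-up supply the decisive step.
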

\begin{proof}
By contracting four edges (highlighted in color in Figure~\ref{fig:icosa_cube} (left)), we obtain a cube (right) with additional diagonals. By contracting edges, we simplified the source linkage by assigning certain source vertices to the same position. Therefore, lower bounds are preserved using the simplified source. %
Note that a cube without diagonals can never provide the same connectivity as a cube. Therefore, we need additional scaling, as certain edges of the source linkage can never be represented in the target linkage via an isometric covering.
\end{proof}
\section{Hardness and Inapproximability}\label{sec:hard}

\noindent We show that deciding if one embedded (planar) graph can be mapped onto another via folding is NP-complete. Further deciding the scale factor needed to allow one planar graph to be mapped onto another cannot be approximated within a factor of $1.5-\epsilon$ for any constant~$\epsilon$. %

\begin{figure}
    \centering
    \includegraphics[width=\arxiv{.75}\journal{0.99}\linewidth]{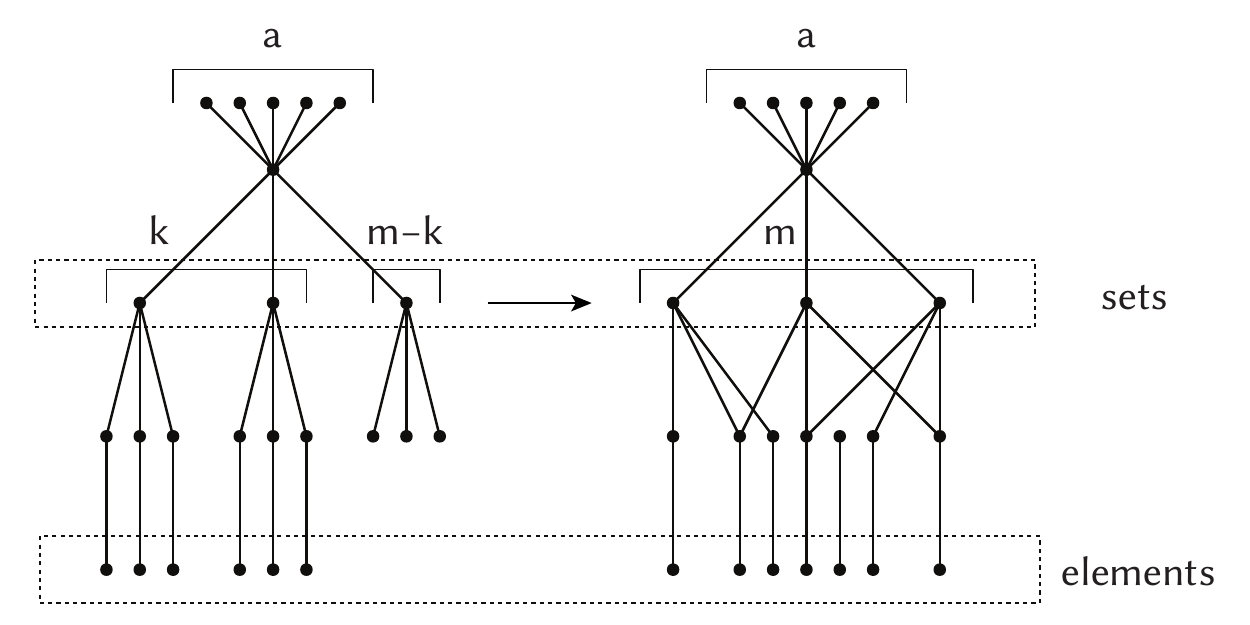}
    \caption{Proof sketch of our reduction from \emph{Planar Set Cover} to \emph{IsoCovering}.}
    \label{fig:proof}
\end{figure}

\medskip
\begin{theorem}[Inapproximability]\label{thm:inapprox}
Given two metric planar graphs $G_1$, $G_2$ and scale factor~$\alpha$, deciding whether $G_1$ can be folded onto $G_2$ via an isometric covering (scale factor~$1$) is NP-complete. Further, the optimal scale factor $\text{OPT}$ of mapping $G_1$ onto $G_2$ is NP-hard to approximate within a factor of $< 1.5$.
\end{theorem}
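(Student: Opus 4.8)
The plan is to reduce from \emph{Planar Set Cover} (given a planar bipartite incidence structure of sets over a universe, and a budget $k$, decide whether $k$ sets cover the universe), which is NP-hard, and to build the reduction so that the gap between ``yes'' and ``no'' instances translates into a multiplicative gap of $1.5$ in the optimal scale factor. The source graph $G_1$ will encode the available sets as gadgets whose edges have total ``budget'' enough to route exactly $k$ set-selections, while the target $G_2$ will encode the universe elements together with connecting structure that every element gadget must be reached (covered) by some selected set. I would design the element gadgets in $G_2$ so that an element covered by a selected set can be reached with scale factor~$1$, whereas forcing coverage of an uncovered element requires ``doubling back'' along a connector edge, which by Lemma~\ref{lem:double} and Lemma~\ref{lem:oddeven} inflates the length of some routed target path to $1.5$ (i.e.\ a source edge of length $1.5$ is forced to cover an element whose natural covering length is $1$). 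Planarity of both graphs is maintained because Planar Set Cover already comes with a planar incidence embedding, so the set-gadgets and element-gadgets can be laid out in the plane following that embedding with connectors along the incidence edges.

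The core of the argument has two halves. First I would establish NP-membership of the scale-factor-$1$ decision problem: a witness is the combinatorial routing data (the assignment of each source vertex to a target vertex or edge position, and each source edge to a target path), which can be checked in polynomial time by verifying lengths and surjectivity; this gives the NP-completeness claim once hardness is shown. (Full NP membership for arbitrary rational $\alpha$ is deferred to the rationality result of Section~\ref{sec:rational}, but at $\alpha=1$ the witness is finite and directly checkable.) Second, for the inapproximability I would prove the two-sided gap: if $k$ sets suffice, then selecting those set-gadgets yields a routing with scale factor exactly~$1$ (every universe element is reachable through a selected set along edges that map isometrically); and conversely, if no $k$ sets cover the universe, then in \emph{any} routing at least one element gadget is left uncovered by the $k$ affordable selections, forcing at least one source edge to stretch to cover it, so $\text{OPT}\geq 1.5$. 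The contrapositive gives: $\text{OPT}<1.5 \iff \text{OPT}=1 \iff$ the Set Cover instance is a yes-instance, so any approximation strictly better than $1.5$ would decide Set Cover.

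The main obstacle I anticipate is the \emph{soundness} direction of the gap, namely forcing $\text{OPT}\geq 1.5$ in the no-instance: I must guarantee that the source simply does \emph{not have enough edge length} to cover every element the ``cheap'' way, so that some target edge is necessarily doubly covered or some source edge is necessarily stretched, and that this unavoidably costs a factor of exactly $1.5$ rather than something smaller. This requires a careful accounting of total perimeter (so that the source budget is tight: it can afford $k$ full set-coverings but not $k+1$) combined with the structural lower-bound machinery of Lemmas~\ref{lem:double} and~\ref{lem:oddeven} to show that any attempt to reroute around a shortfall creates a doubled segment whose length pushes some routed path to $1.5$. Tuning the gadget lengths so that the honest covering is length exactly~$1$ while every dishonest covering is length at least~$1.5$ — and verifying no clever global rerouting evades this locally forced doubling — is the delicate part; the degree-parity argument from Lemma~\ref{lem:oddeven} is the tool I would lean on to rule out such evasions, by placing odd-degree vertices in each element gadget so that coverage without a selected set necessarily incurs a doubling-back.
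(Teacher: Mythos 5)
Your high-level plan (reduce from Planar Set Cover, make yes-instances foldable at scale $1$ and no-instances require scale $\geq 1.5$) matches the paper's strategy, but the mechanism you propose for the soundness direction is the wrong tool and would fail. You want to force the $1.5$ gap via the doubling machinery of Lemmas~\ref{lem:double} and~\ref{lem:oddeven}. Those lemmas only feed into the perimeter-ratio bound $\text{OPT} \geq (\text{target perimeter} + \text{doubled lengths})/(\text{source perimeter})$: a forced doubled edge adds an \emph{additive} constant to the numerator of a ratio whose denominator is the entire source perimeter (which grows with $|C|$ and the padding parameter). That can never produce a multiplicative gap bounded away from $1$ as the instance grows, so "an uncovered element forces a doubling, hence $\text{OPT}\geq 1.5$" does not follow. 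The paper's actual mechanism is a pure distance argument, not a parity/doubling argument: the target places a degree-$1$ leaf on each element node so that element leaves sit at distance $3$ from the root and at pairwise distance $4$; the source is a tree whose root is pinned to the target's top node by $a>4|C|$ pendant leaves, with $k$ children carrying three length-$2$ paths and $|C|-k$ children carrying three pendant leaves. At scale $1$ only the tips of the length-$2$ paths can reach element leaves, and only for elements belonging to the set node that their parent is mapped to (distance $2$ versus distance $\geq 4$ otherwise); the $1.5$ threshold arises because the shallow (depth-$2$) subtrees have leaves at source-distance $2$ from the root while element leaves are at target-distance $3$, so reaching them would require stretching by $3/2$. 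Your proposal contains no analogue of this distance-separation design, and without it the gap does not materialize.

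A secondary but real gap: you never actually specify the gadgets, the budget accounting ("enough edge length to route exactly $k$ selections"), or how the root of the source is anchored so that adversarial global reroutings are excluded. The paper handles the anchoring with the $a$ pendant leaves (a high-degree vertex that must map to the unique high-degree target vertex), and handles "not enough budget" implicitly because the source is a tree with exactly the right number of edges and every target edge must be covered. Your NP-membership argument via an explicit routing witness is fine for $\alpha=1$ and is a reasonable alternative to the paper's ILP-based membership claim, but the hardness half, as written, rests on a lemma that cannot deliver the claimed constant-factor inapproximability.
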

\begin{proof}
Membership immediately follows from utilizing an ILP that is presented in Section~\ref{sec:milp}.

For hardness, we reduce from \emph{Planar Set Cover} with sets of size exactly 3~\cite{lichtenstein1982planar}%
. Here we are given a set of elements $X$ and a collection $C$ of $m$ sets of three elements of $X$. Further, if one constructs the bipartite graph with notes for elements of $C$ and $X$ and edges when $x\in c$, then that graph is planar. The question is whether there is a subset $C' \subset C$ of size $k$ such that each element of $X$ is contained in $C'$.

In this construction, which is sketched in Figure~\ref{fig:proof}, all edges are of length 1. For the target graph first construct a top node and one node for each set $c$ and connect them to the root. Also connect $a>4|C|$ nodes to the root, which will help anchor the root node in the construction. Then, we construct a node for each element $x$ and connect it to the nodes representing sets containing that element. Finally, off of each $x$, add an additional degree-1 node connected to it. This has a total of $|C|+2|X|+1+a$ nodes and $4|C|+|X|+a$ edges.

The initial graph is a tree. The root node has $a+|C|$ children. Of these $a$ children will be leaf nodes. Then $k$ children will each be connected to three length 2 paths which will be the intended to be used to solve the set cover of size $k$. The other $|C|-k$ children will be connected to $3$ leaves and be used to cover the edges in the unused sets. This graph has $4|C|+k+1+a$ nodes and $4|C|+k+a$ edges.

In this construction, due to the $a$ leave nodes, a valid solution must map the root node to the top node. 

If there is a valid solution to the set cover, take the $k$ subtrees of depth 3 and map them to the sets in the set cover solution. This will allow the $k$ paths of length 2 to cover the element nodes and their leaves. Next, use the remaining subtrees to cover all the unchosen set nodes and the edges of their elements.

If a solution to the set cover does not exist we will show there is no solution to the graph folding problem. First, note that all of the element leaves in the target are distance three from the top node and that these leaves are distance 4 from each other. Thus we know the leaves of the $k$ subtrees of depth 3 must be used to cover the leaves of the element nodes. These subtrees are distance 1 from the root and thus must either be mapped to the $a$ nodes (which then prevents them from covering element leaves) or to set nodes. The element leaves in a set are of distance 2 from the node containing them and thus can be covered by the leaves of a subtree mapped to that set node. However, all leaves of the elements not contained in that set are of distance at least 4 and thus cannot be covered. Thus, if there is no set cover of size $k$, some of the element leaves will be impossible to cover in the graph folding.
\end{proof}

For establishing also membership in NP, we rely on the insights from the next section.

\section{The Optimum is Rational}\label{sec:rational}
\noindent It turns out that the best scale factor is guaranteed to be rational,
which we prove as follows.

\smallskip
\begin{theorem}\label{thm:rational}
Given graphs $G_1$ and $G_2$, the optimal scale $\alpha$ such that $G_1$ can be folded onto $G_2$ via isometric covering is rational.
\end{theorem}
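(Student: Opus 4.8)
The plan is to show that once the discrete ``shape'' of an optimal covering is fixed, the remaining freedom is purely continuous and governed by a linear program with rational data, whose optimum is therefore rational; the global optimum is then a minimum of finitely many such rational values.

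First I would formalize the \emph{combinatorial type} of a covering. Scaling the source by $\alpha$, every source edge of length $\ell$ maps to a walk in the (unscaled) target of length $\alpha\ell$. The discrete data of a covering consists of: (i) for each source vertex, which target edge (or vertex) it is placed on, together with the relative order of the source vertices sharing a target edge; and (ii) for each source edge, the exact sequence of target edges its image walk traverses, with integer multiplicities. The only continuous parameters left are the positions $t_v\in[0,1]$ of the source vertices along their target edges, together with the scale factor $\alpha$ itself. This mirrors the mixed-integer program of Section~\ref{sec:milp}, where the binary variables $e_{s,x,y}$, $m_{u,n}$, $r_{s,n}$ encode exactly this discrete data while $a_n$ and the scale remain continuous.

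Second, with the combinatorial type fixed, I would verify that all covering requirements become linear constraints in the variables $(\alpha,\{t_v\})$ with rational coefficients (assuming, as for the Platonic solids, that the input edge lengths are rational). The length-preservation equation for each source edge reads ``constant $+$ (linear function of the two endpoint positions) $=\alpha\ell$'', because the full target edges in the walk contribute a fixed total while the two partial end-edges contribute amounts linear in $t_u,t_v$; crucially, scaling the source rather than the target keeps this equation linear in $\alpha$. The box constraints $t_v\in[0,1]$, the ordering inequalities among co-located source vertices, and the surjectivity (full-coverage) conditions on each target edge likewise reduce, given the fixed routing and ordering, to linear inequalities. Minimizing $\alpha$ subject to these constraints is a rational linear program, so its optimum is attained at a rational vertex of the feasible polytope.

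Finally I would argue finiteness of the number of types and assemble the result. Any explicit construction gives an a~priori upper bound $\alpha_0$ on the optimum; since each walk then has length at most $\alpha_0\ell$ and every target edge has a fixed minimum positive weight, the number of full-edge traversals per walk is bounded, leaving only finitely many routings. At most $|V(S)|$ source vertices can land in the interior of a single target edge, bounding the subdivisions and orderings. Hence there are finitely many combinatorial types, $\text{OPT}$ is the minimum over them of the corresponding rational LP optima, and a minimum of finitely many rationals is rational. The step I expect to be most delicate is the second one---checking that the surjectivity constraint on a target edge carrying several interior source vertices really does linearize once the routing and the left-to-right order of those vertices are fixed, rather than concealing a disjunction; handling this cleanly (for instance by charging each sub-interval between consecutive source vertices to a traversing walk) is the crux of the argument.
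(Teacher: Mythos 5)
Your proposal is correct and follows essentially the same route as the paper: both fix the discrete data of the covering (the integer/binary variables of the mixed-integer program of Section~\ref{sec:milp}), observe that the residual problem in the continuous variables is a linear program with rational data whose optimum is therefore rational, and conclude that the global optimum is a minimum over finitely many such rational values. Your write-up is in fact somewhat more explicit than the paper's on two points it leaves implicit---why there are only finitely many combinatorial types, and why the length-preservation and coverage constraints linearize once the routing and vertex ordering are fixed.
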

\begin{proof}
	We will prove the result by designing an ILP, where we can compute all (exponential) possibilities for binary variable assignments. After assigning these integer variables, we invoke the remaining LP. This LP only has rational coefficients, so the optimal solution will be rational~\cite{AspvallStone80,Megiddo83,stackexchange}. Consequently, optimal scale factors are rational.	

	\medskip
	Recall the ILP encoding from Section~\ref{sec:milp}, as well as scale factor upper bound~$\alpha$, source graph~$S$, target graph~$D$ and $D'$ that is obtained from $D$ after subdividing. After fixing integer variables, the following variables remain: %
	
	{\footnotesize
\begin{align*}
	&\text{directed flow} &&f_{s,x,y} \geq 0 && \text{for every } s \in E(S), \{x,y\}\in E(D')\\
	&\text{aux position} && a_{n}: [0,1] && \text{for every } n \in V(D')\\
	&\text{directed edge length} && l_{s,x,y}: [0,1] && \text{for every } s\in E(S), \{x,y\}\in E(D')\\
	&\text{objective} && o \geq 0 && %
\end{align*}}

	\medskip
	Then, if some of the constraints containing integer variables are not satisfied, we can stop.
	For the remainder, we are left with the following non-integer constraints, constructed for every~$s\in E(S)$, as the remaining integer variables are fixed constants:
	
	{\footnotesize
\begin{align*}
	&\text{edge/length rel.}\hspace{-.75em} && l_{s,x,y} \leq e_{s,x,y} && \text{for }%
	\{x,y\}\in E(D') \\
	&\text{} && l_{s,x,y} \leq a_y - a_x\quad l_{s,y,x} \leq a_y - a_x\hspace{-.75em} && \text{for }%
	x,y\in V(D'), x\prec y\\ %
	&\text{} && a_y \geq a_x \\ %
	&\text{} && l_{s,x,y} - a_y + a_x \geq e_{s, x,y} - 1 %
	\\ %
	&\text{} && l_{s,y,x} - a_y + a_x \geq e_{s, y,x} - 1\\
	&\text{} && a_x=0, a_y=1 && \text{for } (x,y)\in E(D)\\ %
	&\geq\text{demand }1 && \Sigma_{x\in ngbs(n)} f_{s,x,n} -  \Sigma_{x\in ngbs(n)} f_{s,n,x} +\hspace{-.75em} %
	&& \text{for }%
	s=(u,\cdot), n\in V(D')\\[-.5em]
	&\text{} && \qquad\qquad\qquad\quad\alpha\cdot m_{u,n} \geq r_{s,n}\hspace{-.75em} &&\\ %
	&\text{lengths }\leq o && \Sigma_{\{x,y\}\in E(D')}l_{s,x,y} \leq o\\
	&&&\min(o)
\end{align*}}%
The LP,
which can be solved in polynomial time~\cite{AspvallStone80,Megiddo83}, correctly characterizes the optimal scale factor $o$.
\end{proof}

\subsection{Complexity Consequences}\label{sec:consequences}
\noindent This insight results in NP membership for verifying a given scale factor. %

\medskip
\begin{corollary}\label{cor:np}
Given two graphs $G_1$, $G_2$, and a scale factor~$\alpha$, deciding whether $G_1$ scaled by $\alpha$ can be folded onto $G_2$ via an isometric covering %
is in NP. 
\end{corollary}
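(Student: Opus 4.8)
The plan is to establish NP membership by exhibiting a polynomial-size certificate for a valid isometric covering at a given scale factor~$\alpha$, together with a polynomial-time verification procedure. The natural certificate is the combinatorial skeleton of a solution: for each source vertex $u\in V(S)$, which target vertex or target edge it maps to; for each source edge $s\in E(S)$, the routing path (a walk) it traces in the subdivided target~$D'$; and the integer flow/reachability data recording how many times each directed target edge is traversed by each source edge. The crucial observation I would invoke is Theorem~\ref{thm:rational}: since the optimal scale is always rational and is characterized by the ILP of Section~\ref{sec:milp}, a solution at scale~$\alpha$ can be witnessed by fixing exactly the binary (integer) variables of that ILP and leaving the continuous variables (flows, auxiliary positions $a_n$, edge lengths $l_{s,x,y}$, objective~$o$) to be determined by the residual LP.

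Concretely, I would argue as follows. First I would note that it suffices to guess the assignment to all binary variables $e_{s,x,y}$, $m_{u,n}$, $r_{s,n}$ of the ILP from Section~\ref{sec:milp}; by the construction there, the number of auxiliary subdivision vertices per target edge is bounded by $|V(S)|$, so $|V(D')|$ and $|E(D')|$ are polynomial in the input size, and hence the total number of binary variables is polynomial. A nondeterministic machine guesses these in polynomial time. Second, with the binary variables fixed to constants, the remaining constraints form a pure linear program over the rational variables exactly as isolated in the proof of Theorem~\ref{thm:rational}. Checking that the fixed binary assignment satisfies all the integer-only constraints (degree constraints, ``one target per source,'' ``edges covered'') is a direct polynomial-time evaluation. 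Third, I would invoke LP solvability in polynomial time~\cite{AspvallStone80,Megiddo83} to decide whether the residual LP is feasible with objective $o\le\alpha$; feasibility of this LP certifies exactly that a genuine isometric covering at scale~$\alpha$ exists, since the ILP of Section~\ref{sec:milp} faithfully models IsoCovering.

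The step I expect to require the most care is confirming that the guessed discrete data is both \emph{sufficient} to reconstruct a bona fide covering and \emph{polynomially bounded}. Sufficiency relies on the faithfulness of the Section~\ref{sec:milp} encoding: once the routing structure (which directed subdivided edges each source edge uses, and where source vertices land) is fixed, the only freedom left is the continuous placement of auxiliary vertices and the induced lengths, which the residual LP handles. The boundedness hinges on the bound $c\le |V(S)|$ on auxiliary vertices per target edge; I would make explicit that this keeps $D'$, and therefore the entire variable set, polynomial, so the certificate is of polynomial length and verification runs in polynomial time. Given all this, membership in NP follows immediately, which is exactly the content of Corollary~\ref{cor:np}.
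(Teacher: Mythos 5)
Your proposal is correct and matches the paper's approach: the paper's proof simply points to the argument in Theorem~\ref{thm:rational}, which is exactly your scheme of guessing the polynomially many binary variables of the Section~\ref{sec:milp} ILP and then solving the residual rational LP in polynomial time. You spell out the certificate size and verification details more explicitly than the paper does, but the underlying argument is the same.
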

\begin{proof}
	The NP algorithm is given in the proof of Theorem~\ref{thm:rational}.
\end{proof}

Therefore we can use binary search with logarithmic many calls to an NP oracle to compute the optimum. %
Verifying the optimality of a given scale factor~$\alpha$ is complete for complexity class
$\text{DP} = \{a \cap b \mid a \in \text{NP}, b \in \text{coNP}\}$.

\medskip
\begin{theorem}
Given graphs $G_1$, $G_2$, and a scale factor~$\alpha$, it is DP-complete to decide whether $\alpha$ is the smallest scale factor to enable isometric covering of folding $G_1$ onto~$G_2$.
\end{theorem}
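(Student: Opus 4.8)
The plan is to prove DP-completeness by establishing both membership and hardness. For membership, I would express optimality of the scale factor $\alpha$ as the conjunction of two conditions: (1) $G_1$ scaled by $\alpha$ admits an isometric covering of $G_2$, and (2) $G_1$ scaled by any $\alpha' < \alpha$ does \emph{not}. Condition (1) is in NP by Corollary~\ref{cor:np}. For condition (2), I would invoke Theorem~\ref{thm:rational}: since the optimum is rational with bounded denominator (the denominator is controlled by the subdivision count $c \leq |V(S)|$ and the polynomially-bounded LP data), there is a computable gap $\delta$ such that no feasible scale factor lies strictly between the true optimum and $\alpha$ unless the optimum is itself below $\alpha$. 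Thus ``the optimum is $< \alpha$'' reduces to testing feasibility at $\alpha - \delta$, which is an NP question, so its negation (``the optimum is $\geq \alpha$'') is in coNP. The conjunction of an NP and a coNP condition places the problem in DP by definition.

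For DP-hardness, I would reduce from the canonical DP-complete problem \textsc{Sat-Unsat}: given two formulas $(\varphi, \psi)$, decide whether $\varphi$ is satisfiable and $\psi$ is unsatisfiable. The strategy is to reuse the reduction from Theorem~\ref{thm:inapprox}, which maps a \textsc{Planar Set Cover} instance to an \textsc{IsoCovering} instance so that a cover of size $k$ exists iff scale factor $1$ suffices (and otherwise a strictly larger factor is forced). I would build two gadget instances from $\varphi$ and $\psi$ via this reduction (after routing \textsc{Sat} through \textsc{Planar Set Cover}), then combine them into a single graph pair whose optimal scale factor equals a target value $\alpha^\star$ exactly when the $\varphi$-gadget is satisfiable (forcing the ``achievable at $\alpha^\star$'' half) and the $\psi$-gadget is unsatisfiable (forcing the ``not achievable below $\alpha^\star$'' half). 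The combination can be done by taking a disjoint union scaled so the two gadgets contribute to the max-length bottleneck independently, with the overall scale factor being the maximum over the two components.

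The main obstacle I anticipate is engineering the combined instance so that the two DP halves are cleanly separated: I need the $\varphi$-component to control whether $\alpha^\star$ is \emph{attainable} while the $\psi$-component controls whether anything strictly smaller is attainable, without the two interfering. Concretely, I would tune the edge lengths so that the $\psi$-gadget has optimal scale factor exactly $\alpha^\star$ precisely when $\psi$ is unsatisfiable (so that optimum $<\alpha^\star$ iff $\psi$ is satisfiable), while the $\varphi$-gadget has optimal scale factor $\leq \alpha^\star$ iff $\varphi$ is satisfiable. Because the scale factor of a disjoint union is the maximum of the component scale factors, the combined optimum equals $\alpha^\star$ exactly when $\varphi$ is satisfiable (giving $\leq \alpha^\star$) and $\psi$ is unsatisfiable (forbidding $<\alpha^\star$), which is exactly the \textsc{Sat-Unsat} condition. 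Verifying that the inapproximability gap from Theorem~\ref{thm:inapprox} provides enough separation between the ``yes'' and ``no'' scale factors to make $\alpha^\star$ a sharp threshold is the delicate quantitative step.
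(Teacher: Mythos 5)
Your proposal takes essentially the same route as the paper: DP membership as the conjunction of the NP feasibility check with the coNP nonexistence of a strictly smaller feasible scale, and DP-hardness by running the Theorem~\ref{thm:inapprox} construction twice (once per half of the DP pair) with edge lengths rescaled so that the two components' yes/no thresholds straddle the target value and the combined optimum is their maximum---the paper instantiates this with thresholds $4/3$ vs.\ $2$ and $1$ vs.\ $1.5$, target $1.5$, and keeps the components from interfering via distinct anchor degrees $a$ and $2a$. The one adjustment to your tuning: the $\varphi$-component's ``yes'' value must be \emph{strictly} below $\alpha^\star$ (as $4/3 < 1.5$ is in the paper), since allowing it to equal $\alpha^\star$ would let an instance with $\psi$ satisfiable still realize combined optimum exactly $\alpha^\star$.
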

\begin{proof}
 Verifying that $\alpha$ is a valid scale factor is in NP by %
 Corollary~\ref{cor:np}.
 Deciding whether there is a strictly smaller scale factor $\alpha'<\alpha$ is in NP, so verifying the nonexistence of such a scale factor is in coNP.
 Because both can be verified independently, the problem is in DP.

 Hardness follows by applying the construction of Theorem~\ref{thm:inapprox} two times. We reduce from a tuple $(P,P')$ comprising the NP-hard problem \emph{Planar Set Cover}~$P$ and the coNP-hard coproblem \emph{Planar Set Uncover}~$P'$. For $P'$, we use the same approach as in Figure~\ref{fig:proof}, where we have a scale factor~$\alpha'\geq 1.5$ in case $P'$ is a no-instance. For $P$, we still use this approach, but with a different large-degree $2a$ (to distinguish $P$ from $P'$) and where we replace every edge of the source tree by a path comprising $3$ edges and every edge of the target tree by a path of $4$ edges.
 
 Consequently, the ideal scale factor~$\alpha$ for $P$ (if it is a yes instance) would be $\alpha=\frac{4}{3} \approx 1.333$ instead of $1$. Otherwise (in case of a no-instance), the scale would be $1.5 \cdot \frac{4}{3}=2$, which is, however, larger than $1.5$. As a result, if we manage a combined scale factor of $\max\{\alpha, \alpha'\}=1.5$, we know that $P$ is a yes-instance (as the factor is smaller than $2$). Further, if we do not manage combined scale factor $<1.5$, $P'$ is a no-instance (as desired). %
\end{proof}

\section{Open Problems}\label{sec:open}

The next steps include tightening the bounds in Table~\ref{tab:UpperBounds} and developing heuristics for folding general polyhedra. 
We would also like to strengthen the hardness proof to apply to polyhedral graphs, which requires more connectivity than our current construction. 
Finally, it would be interesting to develop solutions that permit \textit{continuous} folding motions, which are required in the case of \textit{rigid} polyhedral linkages. 

\bibliography{main}
\journal{
\begin{biography}
\profile[photos/lily]{Lily Chung}{is a Ph.D student at MIT CSAIL working with Erik Demaine and Ronitt Rubinfeld.  She received her B.Sc. in Computer Science from MIT.  She works on sublinear algorithms, computational geometry, and algorithmic lower bounds.}

\profile[photos/edemaine]{Erik D. Demaine}{received a B.Sc. degree from Dalhousie University in 1995,
and M.Math. and Ph.D. degrees from University of Waterloo in 1996 and 2001, respectively. Since 2001, he has been a
professor in computer science at the Massachusetts Institute of Technology. His
research interests range throughout algorithms, from data structures for improving web searches to the geometry of understanding how proteins fold to the computational
difficulty of playing games. In 2003, he received a MacArthur Fellowship as a ``computational geometer tackling and solving difficult problems related to folding and bending—moving readily
between the theoretical and the playful, with a keen eye to revealing the former in the latter''. He cowrote a book about the theory of folding, together with Joseph O’Rourke (\textit{Geometric Folding
Algorithms}, 2007), and a book about the computational complexity of games, together with Robert Hearn (\textit{Games, Puzzles, and
Computation}, 2009).}
\profile[photos/mdemaine]{Martin L. Demaine}{is an artist and mathematician. He started the first private
hot glass studio in Canada and has been
called the father of Canadian glass. Since
2005, he has been the Angelika and Barton Weller Artist-in-Residence at the Massachusetts Institute of Technology. Both
Martin and Erik work together in paper,
glass, and other material. They use their exploration in sculpture
to help visualize and understand unsolved problems in mathematics, and their scientific abilities to inspire new art forms. Their
artistic work includes curved origami sculptures in the permanent
collections of the Museum of Modern Art (MoMA) in New York,
and the Renwick Gallery in the Smithsonian. Their scientific
work includes over 100 published joint papers, including several
about combining mathematics and art.}
\profile[photos/hecher_photo]{Markus Hecher}{is a postdoctoral researcher at MIT CSAIL working with Erik Demaine. He received a binational PhD in computer science from the Vienna University of Technology (Austria) and the University of Potsdam (Germany). His main interests are counting problems, computational complexity, and utilizing new insights into the hardness of combinatorial problems to improve existing algorithms.}
\profile[photos/rebecca]{Rebecca Lin}{is a Ph.D. student at MIT CSAIL working with Erik Demaine. She received her B.Sc. in Computer Science from the University of British Columbia, where she was advised by William Evans. Her research explores geometrical problems in art, design, and fabrication.}%

\profile[photos/jayson-lynch-p-500]{Jayson Lynch}{is a Research Scientist in the FutureTech Lab at MIT working on predicting the future progress of algorithms development and understanding the fundamental limitations to improving computing performance. They earned their PhD from MIT under Erik Demaine working on the computational complexity of motion planning problems, computational geometry, and games and puzzles.}
\profile[photos/selfphto_nara]{Chie Nara}{received her B.A., M.S., and Ph.D. degrees from Ochanomizu University in Tokyo.
She served as a professor at Tokai University before taking up the current research position at Meiji University. Her
research fields are functional analysis, graph theory, and discrete geometry.}
\end{biography}}

\end{document}